\newtheorem{proposition}{Proposition}
\newtheorem{theorem}{Theorem}
\newtheorem{lemma}{Lemma}
\newtheorem{example}{Example}
\newtheorem{remark}{Remark}
\DeclareMathAlphabet{\mathpzc}{OT1}{pzc}{m}{it}
\DeclareMathAlphabet{\mathcalligra}{T1}{calligra}{m}{n}
\title{On Sybil-proof Mechanisms}
\author[1]{Minghao Pan}
\author[2]{Bruno Mazorra}
\author[3]{Christoph Schlegel}
\author[4]{Akaki Mamageishvili}
\affil[1]{California Institute of Technology}
\affil[4]{Offchain Labs}
\affil[1,2,3]{Flashbots}
\date{}
\title{On Sybil-Proof Mechanisms} 
\newtheorem{question}{Question}
\begin{document}


\maketitle
\begin{abstract}
We show that in the single-parameter mechanism design environment, the only non-wasteful, symmetric, incentive compatible and Sybil-proof direct mechanism is a second price auction with symmetric tie-breaking. Thus, if there is private information, lotteries or other mechanisms that do not always allocate to a highest-value bidder are not Sybil-proof or not incentive compatible. Moreover, we show that our main (im)possibility result extends beyond linear valuations, but not to multi-unit object allocation with unit-demand bidders. 

We also provide examples of mechanisms (with higher interim payoff for the bidders than a second price auction) that satisfy all of the other axioms and a weaker, Bayesian notion of Sybil-proofness. Thus, our (im)possibility result does not generalize to the Bayesian setting and we have a larger design space: With Sybil constraints, equivalence between dominant strategy and Bayesian implementation (that holds in classical single-parameter mechanism design without Sybils) no longer holds. 
\end{abstract}


\section{Introduction}

In environments where creating new identities is cheap or free and verifying participants' identities is hard, mechanism design must consider the possibility that a participant creates multiple identities to secure a better outcome from the mechanism. Such manipulation of a mechanism is commonly referred to as a Sybil attack and a mechanism that is immune to such attacks is called \emph{Sybil-proof}.
Sybils are of particular concern in mechanisms deployed online, in case identities are not verified. Blockchain systems, for example, where participation is permissionless, are an environment that are both rich in deployed mechanisms and susceptible to Sybil-attacks.  In principle, however, Sybils can also occur in offline environment, e.g. if a participant asks a friend to participate in a mechanism on their behalf.

When allocating private goods, a natural choice of a Sybil-proof mechanism is an auction that assigns (all units of) the good to the highest-value bidder(s). However, pure auctions may be undesirable if we aim to ensure fairness, wider participation, or other distributional goals.

In the case of online sales of event tickets,  tickets are often sold at a relatively low price where there is still excess demand. The ``under-pricing" of tickets is a way for the event organizers to give dedicated fans with smaller budgets a chance to participate, yet this also makes the primary sale a target for middlemen that resell tickets with a premium on a secondary market~\cite[]{budish2023}. If quantities of tickets per buyer are capped, middlemen effectively achieve this by using Sybils in the primary sale.\footnote{Related issues also appear in crypto-currency ``airdrops" where protocols want to reward early adopters with tokens, see~\cite[]{messias2023}, and the use of non-proportional rules, has lead to wide-spread exploits through Sybils who afterwards sell their airdrop on a secondary market.} 

As another example, in blockchain systems, randomness in allocating the right to propose blocks is desirable to avoid power concentration, which could lead to censorship and undermine decentralization. Bitcoin's proof-of-work mechanism in particular, which uses pseudorandomness in proposer selection, has been argued to be an effective and Sybil-proof method to achieve decentralization of the system by having many different actors ``mining" blocks. 
This assessment can, however, change if there is significant heterogeneity in value for proposal rights.\footnote{Heterogeneous (and private) value is a concern, as different participants can generate different values from block proposal rights. For example, some block proposers have exclusive access to some of the submitted transactions to include in their block, as documented in~\cite[]{oz} for the case of Ethereum. In practice, this had led to the creation of out-of-protocol secondary market for the content of blocks, and these markets exhibit a high degree of market concentration in the hand of few block builders~\cite[]{oz}; in other words, empirically the assertion of a ``monopoly without monopolist"~\cite[]{huberman2021} does not really hold.} 

Given these considerations, it is natural to ask whether there are Sybil-proof mechanisms for assigning private goods other than auctions. Our main result, Theorem~\ref{maintheorem}, gives a strong negative answer to this question: in the classical Myersonian mechanism design setting with quasi-linear preferences and one-dimensional types, we show that the only monotonic, symmetric, and non-wasteful allocation rule for which the induced mechanism with ``Myerson payments'' is Sybil-proof, is the rule that allocates everything to the highest value bidder, breaking ties uniformly (if the good is indivisible), resp. sharing the unit equally among highest value bidders (if the good is divisible). Thus, while there is a large design space of monotonic allocation rules - as two extreme cases we could assign the good with equal probability, independently of value, among participants or we could always assign the good to the highest value bidder - adding Sybil-proofness collapses the set of implementable allocation rules to a single one.  The result extends beyond linear valuation (Proposition~\ref{thm:superadditive}), but interestingly not to the multi-unit case with unit-demand bidders (Proposition~\ref{MultiUnit}).

While our main theorem is formulated for direct mechanisms that elicit values from the participants, indirect mechanism cannot circumvent the impossibility, as we have a version of the revelation principle in the presence of Sybils (Section~\ref{sec:revelation_sybils}): for every symmetric indirect mechanism with Sybils that implements an objective function in dominant strategies, the direct mechanism induced by the objective function is Sybil-proof and incentive compatible.

One way out of our impossibility result is to move to the Bayesian setting. Instead of the strong ex-post requirement of Sybil-proofness, we could demand that no bidder can gain from introducing additional Sybils in equilibrium, provided all other bidders truthfully report their valuations and do not use Sybils. In this weaker setting, we establish that there are indeed mechanisms which are non-wasteful, symmetric, incentive-compatible, and Bayesian Sybil-proof —yet they do not always allocate to the highest bidder (Theorem~\ref{thm:general}).
The mechanism allocates the object through a lottery among high value bidders above a threshold (and through an auction if there are no high value bidders).
Thus, under Sybil-constraints and in contrast to the classical mechanism design setting without Sybils, Bayesian implementation is strictly weaker than dominant strategy implementation and allows for a larger design space.

\subsection{Related Work}
Sybil-proof mechanisms have been studied in various contexts, including voting systems~\cite[][]{wagman2008}, combinatorial auctions~\cite[][]{false_name,false_name_vcg_bayesian,false_name_combinatorial}, recommender systems~\cite[][]{brill2016} and blockchain systems~\cite[][]{chen2019,Leshno2020}, and different Sybil-proof mechanisms\footnote{Previously literature also used the term {\it false-name proof} mechanisms for a stronger notion than the one we use.} have been discussed in these contexts. Most related to our work, \cite[][]{false_name} prove that there is no Pareto-efficient, false-name-proof combinatorial auction. We do not require Pareto efficiency, as we are primarily interested in whether non-trivial distributional goals can be achieved. However, it is as a consequence of our theorem that the combination of Non-wastefulness, Sybil-proofness and Incentive Compatibility implies Pareto-efficiency.

Our results also relate to the literature on mechanism design with one-dimensional types. In the case without Sybils, for one-dimensional types Dominant-Strategy- and Bayesian- incentive compatibility are equivalent in the sense that for each Bayes-Nash IC mechanism one can construct a Dominant-Strategy IC mechanism that gives each participant the same interim expected pay-off~\cite[]{ECTA1,ECTA2}. We show that the statement is no-longer true with additional Sybil-proofness constraints and dominant-strategy implementation is strictly stronger than Bayes-Nash implementation in the presence of Sybils.  

Similar but different impossibility results also occur if one considers collusion resistance instead of Sybil resistance as a desideratum.\footnote{Collusion differs from sybilling in so far as different agents colluding, in general, have different preferences over outcomes of a mechanism. Moreover, collusion is usually studied in a fixed population setting, whereas sybilling means creating (an arbitrary number of) new participants with alligned preferences.} For transaction fee mechanisms for blockchains for example,~\cite[]{transaction_fee_mechanisms} show that there is no non-trivial mechanism that satisfies constant probability allocation, incentive compatibility, miner incentive compatibility and collusion resistance.

\section{Model}
We consider mechanisms with a variable population of bidders: A {\bf (direct) mechanism} \footnote{Focusing on direct mechanism, where participants report their value to the mechanism which determines an allocation from the reports is without loss of generality by the revelation principle, which we formally establish in Section~\ref{sec:revelation_sybils}.} 
specifies for each finite $\mathcal{N}\subset\mathbb{N}$ set of {\bf bidders} (some of them real agents, some of them, possibly, Sybils), an {\bf allocation rule} $x^{\mathcal{N}}:\mathbb{R}_+^{\mathcal{N}}\rightarrow\Delta(\mathcal{N})$ where $\Delta(\mathcal{N}):=\{x\in\mathbb{R}_+^{\mathcal{N}}:\sum_{i\in\mathcal{N}}x_i\leq 1\}$  and a {\bf payment rule} $p^{\mathcal{N}}:\mathbb{R}^{\mathcal{N}}_+\rightarrow\mathbb{R}^\mathcal{N}.$
In the following, we will often omit the superscript $\mathcal{N}$ when there is no ambiguity.
We can interpret the allocation shares $x^{\mathcal{N}}_i(v)$ for the reported {\bf values}\footnote{We work with the non-negative reals as type space. Alternatively, we could also work with an interval $[0,\bar{v}]$ as type space and obtain the same characterization result with completely analogous proofs.} $v\in\mathbb{R}_+^{\mathcal{N}}$ either as probabilities of obtaining an indivisible good or as an allocation of a perfectly divisible good of which one unit is distributed in total. 
A bidder $i$ with value $v_i\geq 0$ has a linear utility $$U_i=v_ix_i-p_i$$
 if allocated a share $x_i$ and making a payment of $p_i$.


Next, we introduce several axioms that the mechanisms should satisfy. 
First, we want participation to be voluntary, in the sense that bidders who do not derive value from the item do not need to pay:\newline

\noindent {\bf Payment normalization}:\label{zero_bid_payment}
For each finite $\mathcal{N}\subseteq\mathbb{N}$, bidder $i\in\mathcal{N}$ and values $v_{-i}\in\mathbb{R}_+^{\mathcal{N}\setminus\{i\}}$ we have 
$$
p_i^{\mathcal{N}}(0,v_{-i})\leq0.
$$

Second, we want the allocation rule to always allocate the whole unit:
\newline

\noindent {\bf Non-wastefulness}: For each finite $\mathcal{N}\subset\mathbb{N}$ set of agents and $v\in \mathbb{R}^{\mathcal{N}}_+$  we have 

$$
    \sum_{j\in\mathcal{N}}x_j^{\mathcal{N}}(v)=1.
$$

Third, we want the allocation rule to treat agents symmetrically:\newline

\noindent {\bf Symmetry}: For each finite $\mathcal{N}\subset\mathbb{N}$ set of agents, $v\in \mathbb{R}^{\mathcal{N}}_+$, permutation $\pi:\mathcal{N}\rightarrow\mathcal{N}$ and each $j\in\mathcal{N}$ we have $$x_j^{\mathcal{N}}(v)=x_{\pi(j)}^{\mathcal{N}}(\{v_{\pi(i)}\}_{i\in\mathcal{N}}).$$

Fourth, we want the mechanism to be dominant strategy\footnote{It is well-known that for the single-parameter setting a Bayesian incentive compatible mechanism exists if and only if a dominant strategy incentive compatible mechanism exists for the same allocation rule so that dominant strategy incentive compatibility is not really a stronger property.} incentive compatible:\newline

\noindent {\bf Incentive Compatibility}: For each $v\in\mathbb{R}_+^{\mathbb{N}}$, each finite $\mathcal{N}\subset\mathbb{N}$ set of agents, each agent $i\in\mathcal{N}$ and bid $u_i\geq 0$ we have
$$v_ix_i^{\mathcal{N}}(v)-p_i^{\mathcal{N}}(v)\geq v_ix_i^{\mathcal{N}}(u_i,v_{-i})-p_i^{\mathcal{N}}(u_i,v_{-i}).$$
As known from classical results \cite[]{myerson1981}, if zero value bidders do not pay, this is equivalent to using ``Myerson payments'',
\begin{align}p_{j}(v):=v_{j}\cdot x_{j}(v_{j},v_{-j})-\int_{0}^{v_{j}}x_{j}(z,v_{-j})dz,\label{Myerson}
\end{align}
and requiring the axiom of 
\newline\\
\noindent{\bf Monotonicity}: For each finite $\mathcal{N}\subset\mathbb{N}$ set of agents, the function $x^{\mathcal{N}}$ is non-decreasing on its domain.\\

Fifth, we want the mechanism to be Sybil-proof. Previous literature has used the term ``false-name-proofness",~\cite[]{false_name}, but usually for the combination of incentive compatibility and Sybil-proofness, which requires immunity to deviations where the bidder reports a different value \emph{and} creates Sybils. For our result a weaker notion of Sybil-proofness is needed, which only requires immunity to Sybil attacks where a bidder reports truthfully from his original account and creates one Sybil with an arbitrary bid:\\

\noindent {\bf Sybil-proofness}: For each $v\in\mathbb{R}^{\mathbb{N}}_+$, each finite $\mathcal{N}\subset\mathbb{N}$, $i\in\mathcal{N}$, $j\in\mathbb{N}\setminus\mathcal{N}$  and bid $u\geq0$, we have

\[v_ix_i^{\mathcal{N}}(v)-p_i^{\mathcal{N}}(v)\geq v_i\left(x_i^{\mathcal{N}\cup \{j\}}(v,u)+x_j^{\mathcal{N}\cup \{j\}}(v,u)\right)-p_i^{\mathcal{N}\cup \{j\}}(v,u)-p_j^{\mathcal{N}\cup \{j\}}(v,u).\]

\begin{remark}
There are other closely related notions of (ex-post) Sybil-proofness that we could use instead and still obtain our main result. Note that, as we require immunity to manipulations through the creation of one Sybil, the following main result will hold a forteriori also for stronger notions of Sybil-proofness that require immunity to manipulation through the creation of an arbitrary number of Sybils. In particular, our main result holds with Sybil-Proofness (and incentive compatibility) replaced by the False-Name-Proofness axiom studied in the previous literature. However, we use our version of the Sybil-proofness axiom to establish an impossibility result with as weak of an ex-post Sybil-proofness axiom as possible yielding a stronger result.

Another natural variant of Sybil-proofness would be to require that for each bid vector representing multiple bids of one agent, there is a single bid that weakly dominates the Sybil-bid.
With this notion, our main result would again hold true, as Incentive Compatibility and this version of Sybil-Proofness imply again False-Name-Proofness which implies our version of Sybil-Proofness (and Incentive Compatibility).
\end{remark}
\section{Characterization Result}
We now show that the only direct mechanism that satisfies all of the above axioms is a second-price auction.

\begin{theorem}\label{maintheorem}
A direct mechanism is payment normalized, non-wasteful, symmetric, incentive compatible, and Sybil-proof if and only if it is a second price auction with symmetric tie-breaking.
\end{theorem}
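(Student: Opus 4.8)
The plan is to prove the nontrivial direction: any non-wasteful, symmetric, incentive compatible, sybil-proof mechanism must be a second-price auction with symmetric tie-breaking. By Myerson, incentive compatibility and monotonicity pin down payments via~\eqref{Myerson}, so the mechanism is determined by the allocation rule, and I only need to show that monotonicity, symmetry, non-wastefulness, and sybil-proofness force $x^{\mathcal{N}}$ to put all mass (equally split) on the highest-value bidder(s). First I would rewrite the sybil-proofness inequality using the Myerson payment formula to turn it into a statement purely about allocation rules: the gain to bidder $i$ from spawning a sybil $j$ with bid $u$ is $v_i\bigl(x_i^{\mathcal{N}\cup\{j\}}(v,u)+x_j^{\mathcal{N}\cup\{j\}}(v,u) - x_i^{\mathcal{N}}(v)\bigr)$ minus the change in total Myerson payment of $i$ and $j$; substituting~\eqref{Myerson} and simplifying (the boundary terms involving $v_i x_i$ partly cancel), sybil-proofness becomes an inequality comparing integrals of the allocation function with and without the extra agent. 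The key consequence I expect is a ``merging'' relation: adding a bidder with bid $u$ and looking at the combined share $x_i + x_j$ at profile $(v,u)$ should behave, from $i$'s incentive perspective, no better than the share $x_i$ at the smaller profile — and taking $u = v_i$ together with symmetry should give that the combined allocation at $(v, v_i)$ essentially reproduces $i$'s original allocation, which is a strong rigidity constraint.

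Next I would exploit the freedom in the sybil's bid $u$. Setting $u$ just below $v_i$ and $u$ just above $v_i$, and letting $u\to v_i$, symmetry forces $x_i^{\mathcal{N}\cup\{j\}}(v,v_i) = x_j^{\mathcal{N}\cup\{j\}}(v,v_i)$, so each gets half of their combined share; feeding this into the sybil-proofness inequality at $u=v_i$ should yield that the combined share equals $x_i^{\mathcal{N}}(v)$, i.e. creating a tied sybil is exactly neutral. Then I would push $u$ large: if $x^{\mathcal{N}}$ ever gave positive share to a non-maximal bidder, I would construct a sybil bid $u$ that, by monotonicity, strictly increases the combined mass controlled by $i$ relative to what $i$ had, producing a profitable sybil deviation — here the Myerson payment only charges $i$ and $j$ for the ``area under their own allocation curves,'' and the point is that a bidder can use a low-bidding sybil to soak up allocation probability cheaply (since a bid of $0$ pays $0$) when the rule is not winner-take-all. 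The cleanest route is probably: show first that in any two-bidder profile $x^{\{1,2\}}(v_1,v_2)$ must be $(1,0)$ when $v_1>v_2$, by having bidder $1$ (or a third-party viewpoint) compare against spawning a zero-bid sybil, and then bootstrap to arbitrary $\mathcal{N}$ by adding sybils one at a time.

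From there the argument is: non-wastefulness plus ``zero mass on non-maximal bidders'' plus symmetry immediately gives that the highest-value bidders split the unit equally, which is exactly the second-price-auction allocation with symmetric tie-breaking; Myerson payments for this rule are precisely the second-highest value charged to the winner (and $0$ to losers), and the no-pay-for-zero-value normalization is satisfied. I would also separately verify the easy direction — that the second-price auction with symmetric tie-breaking is non-wasteful, symmetric, IC, and sybil-proof — which is direct: an extra sybil can only lower (or keep equal) the winning bidder's net payoff since the sybil either doesn't win (pays $0$, changes nothing or raises the price) or ties/wins in a way that, combined with its Myerson payment, is a wash.

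The main obstacle I anticipate is the step ruling out positive allocation to non-maximal bidders in profiles with three or more agents: with many bidders the combined share $x_i + x_j$ and the joint Myerson payment interact in a less transparent way, and I will need to choose the sybil bid $u$ carefully (likely $u=0$, or $u$ equal to some existing value) and use monotonicity in the right coordinate to guarantee the deviation is strictly profitable, rather than merely weakly. Handling the divisible/indivisible cases uniformly and dealing with ties (where the allocation function can be discontinuous) will require care with one-sided limits, but symmetry does most of the work at the tie points.
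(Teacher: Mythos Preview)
Your high-level architecture matches the paper's: reduce to the allocation rule via Myerson, settle the two-bidder case, then induct on the number of bidders. The induction step you sketch is essentially the paper's as well. The gap is in the two-bidder case, where your plan relies on two claims that do not go through as stated.

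First, the assertion that ``creating a tied sybil is exactly neutral'' does not follow. Sybil-proofness is a one-sided inequality; it tells you the deviation is weakly unprofitable, not that it is a wash. Nothing in the axioms gives you the reverse inequality, so you cannot conclude equality of the combined share with the original share at $u=v_i$. Second, the claim that a low-bidding sybil can ``soak up allocation probability cheaply'' whenever the rule is not winner-take-all is not established. For many natural rules (e.g.\ the proportional rule), a bid of $0$ receives zero allocation, and for a positive sybil bid the Myerson payment of the sybil and the drop in $i$'s own share can offset the gain; there is no general argument here that a \emph{single} sybil deviation is strictly profitable at some profile. Your proposal never produces a concrete profile and sybil bid at which the inequality is violated for an arbitrary monotone, symmetric, non-wasteful rule.

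The paper's route around this obstacle is the part you are missing. Rather than look for one profitable sybil, it \emph{iterates} the sybil-proofness inequality across population sizes: with bidder $1$ at value $u$ facing $n-1$ bidders at $v<u$, a sybil at $v$ yields a gain of at least $(u-v)\tfrac{1}{n}(1-x_1(u,v_{[2,n+1]}))$, and telescoping over $n$ gives $U_1(u,v)\geq (u-v)\sum_n \tfrac{1}{n}(1-x_1(u,v_{[2,n+1]}))$. Divergence of the harmonic series forces $\limsup_n x_1(u,v_{[2,n]})=1$, which in turn gives the key lower bound $U_1(u,v)\geq u-v$. The two-bidder case is then finished by an averaging argument: integrating $U_1(u,v)$ over $v\in[0,u]$ gives \emph{exactly} $u/2$ by non-wastefulness and symmetry (since $x_1(z,v)+x_1(v,z)=1$), while the lower bound gives at least $u/2$; equality a.e.\ plus monotonicity forces $x_1(z,u)=0$ for $z<u$. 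None of these three ingredients---the telescoping across population sizes, the harmonic divergence, or the averaging identity---appears in your plan, and I do not see how to close the two-bidder case without something of comparable strength.
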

\begin{proof}
Note that payment normalization and Sybil-proofness imply that $p_{j}(0,v_{-j})=0$ for a bidder $j$ who bids $0$. Thus, by Myerson's lemma, the payments that implement the rule in dominant strategies are defined by Equation (\ref{Myerson}).
Subsequently, we will use the following two facts about the payments: first we have individual rationality under Myerson payments
\[
p_{j}(v)\leq v_{j}\cdot x_{j}(v),
\]
for any $v\in \mathbb{R}_+^\mathcal{N}$,
and second the payoff of a bidder $j$  whose value is $v_{j}$ and bids
truthfully when the other bidders bid $v_{-j}$ is%
\begin{eqnarray}
U_j(v):=v_{j}\cdot x_{j}(v_{j},v_{-j})-p_{j}(v)=\int_{0}^{v_{j}}x_{j}(z,v_{-j})dz.
\label{eqn0}
\end{eqnarray}

In the following, we show three lemmas that will be used for the proof of the theorem. The first lemma says that, when there are many bidders that bid the same value, one bidder bidding higher will almost certainly get the good.

\begin{lemma}
\label{lm: limit of x1}For any $u>v$, $u,v\in \mathbb{R}_+$, we have
\[
\limsup_{n}x_{1}^{1\cup \{2,\ldots ,n\}}(u,v_{[2,n]})=1.
\]
\end{lemma}

\begin{proof}
Suppose that the value of Bidder $1$ is $u$ and that the values of all other $n-1$ bidders are $v$. Bidder $1$ could deviate by bidding $u$ from his
original account and creating a Sybil that bids $v$. By Symmetry and Non-wastefulness, the Sybil
account has a chance of 
\[
x_{n+1}(u,v_{[2,n+1]})=\frac{1}{n}\left[ 1-x_{1}(u,v_{[2,n+1]})\right] 
\]%
to win the lottery. The payment from the Sybil account is at most $v\cdot
x_{n+1}(u,v_{[2,n+1]})$, by individual rationality. In order to prevent Bidder $1$ from this deviation,
we must have%
\begin{eqnarray}
U_{1}(u,v_{[2,n]}) &\geq &U_{1}(u,v_{[2,n+1]})+(u-v)x_{n+1}(u,v_{[2,n+1]}) 
\nonumber \\
&=&U_{1}(u,v_{[2,n+1]})+(u-v)\frac{1}{n}\left[ 1-x_{1}(u,v_{[2,n+1]})\right]
,  \label{eqn2}
\end{eqnarray}%
where the first inequality is implied from~\eqref{eqn0} and properties of the integral.

By a straightforward induction on $n$, we have 
\[
U_{1}(u,v)\geq (u-v)\sum_{n=2}^{\infty }\frac{1}{n}\left[
1-x_{1}(u,v_{[2,n+1]})\right] .
\]%
Suppose by contradiction that $\limsup_{n}x_{1}(u,v_{[2,n]})<1$, then there
exists $N$ large and $a>0$ small such that for any $n>N$, $%
x_{1}(u,v_{[2,n]})<1-a$. Hence, 
\begin{eqnarray*}
U_{1}(u,v) &\geq &(u-v)\sum_{n=2}^{\infty }\frac{1}{n}\left[
1-x_{1}(u,v_{[2,n+1]})\right]  \\
&\geq &(u-v)\sum_{n=N}^{\infty }\frac{1}{n}\left[ 1-x_{1}(u,v_{[2,n+1]})%
\right]  \\
&\geq &(u-v)\sum_{n=N}^{\infty }\frac{a}{n} \\
&=&\infty ,
\end{eqnarray*}%
which is impossible. We can conclude that $\limsup_{n}x_{1}(u,v_{[2,n]})=1.$
\end{proof}

We then lower bound the expected payoff of the higher value bidder when there are
two bidders.

\begin{lemma}
\label{U lower bound} $U_{1}(u,v)\geq u-v$ if $u>v$, $u,v\in \mathbb{R}_+$.
\end{lemma}

\begin{proof}
Fix $\varepsilon \in (0,u-v)$ small. Equation (\ref{eqn2}) implies, $%
U_{1}(u,v)\geq U_{1}(u,v_{[2,n]})$ for any $n\geq 2$ and Equation (\ref{eqn0}) together with the monotonicity of $x$ implies
\begin{eqnarray*}
U_{1}(u,v_{[2,n]}) &=&\int_{0}^{u}x_{1}(z,v_{[2,n]})dz \\
&\geq &\int_{v+\varepsilon }^{u}x_{1}(z,v_{[2,n]})dz \\
&\geq &(u-v-\varepsilon )x_{1}(v+\varepsilon ,v_{[2,n]}).
\end{eqnarray*}%
Taking $\limsup_n $ on both sides and using Lemma~\ref{lm: limit of x1}, we get%
\[
U_{1}(u,v)\geq \limsup_n U_{1}(u,v_{[2,n]})\geq (u-v-\varepsilon ).
\]%
As $\varepsilon >0$ is arbitrary, we arrive at%
\[
U_{1}(u,v)\geq u-v.
\]
\end{proof}

Next, we show that the utility of a bidder is (weakly) decreasing in the bid of the other bidder.

\begin{lemma}\label{U cts}
For a fixed $u\geq 0$, the function $v\longmapsto U_{1}(u,v)$ is decreasing in $[0,u]$.
\end{lemma}

\begin{proof}
Let $v_{1}>v_{2}$. Then by Equation (\ref{eqn0}) and monotonicity,  
\[
U_{1}(u,v_{1})=\int_{0}^{u}x_{1}(z,v_{1})dz=%
\int_{0}^{u}(1-x_{2}(z,v_{1}))dz
\leq%
\int_{0}^{u}(1-x_{2}(z,v_{2}))dz=U_{1}(u,v_{2}).
\]
\end{proof}

We then fix $u$ as the value of Bidder $1$ and draw the value of Bidder $2$
from the uniform distribution between $0$ and $u$. The expected utility of Bidder $1$ is%
\[
\frac{1}{u}\int_{v=0}^{u}U_{1}(u,v)d v=\frac{1}{u}\int_{v=0}^{u}%
\int_{z=0}^{u}x_{1}(z,v)dzdv=\frac{1}{u}\int \int_{u\geq v>z\geq
0}(x_{1}(z,v)+x_{1}(v,z))=\frac{u}{2},
\]%
where we use Equation (\ref{eqn0}) in the first equality and Non-wastefulness in the third equality. 
On the other hand, by Lemma \ref{U lower bound}, the expected utility of Bidder $1$
is at least%
\[
\frac{1}{u}\int_{v=0}^{u}U_{1}(u,v)dv\geq \frac{1}{u}\int_{v=0}^{u}(u-v)dv=%
\frac{u}{2}.
\]%
Thus, for $v\in (0,u)$, $U_{1}(u,v)=u-v$, almost surely. In particular, there exists a sequence $%
v_{i}\uparrow u$ such that $U_{1}(u,v_{i})=u-v_{i}$. By Lemma \ref{U cts}, we have%
\[
U_{1}(u,u)\leq \lim_{i}U_{1}(u,v_{i})=0
\]%
so that $U_{1}(u,u)=0$. However, we know%
\[
0=U_{1}(u,u)=\int_{z=0}^{u}x_{1}(z,u)dz
\]%
so that $x_{1}(z,u)=0$ almost surely. As $z\mapsto x_1(z,u)$ is non-decreasing due to monotonicity, we have $x_1(z,u)=0$ for every $z<u$. We have established that for the case of two bidders, the allocation rule always assigns the item to the highest value bidder. Next, we show by induction on the number of bidders that in the case of more than two bidders the item is also allocated to the highest value bidder:

We claim that for any $n\geq  2$ and any $v<u:=\max\{u_2,\ldots,u_n\}$, Bidder $1$ will not obtain the good if reporting $v$, i.e. $x_1(v,u_2,\ldots,u_n)=0$.  We proceed by induction on $n$. We already know that the base case $n=2$ is true. Suppose that the claim is true for $n=k$ and assume towards contradiction that $x_1(v,u_2,\ldots,u_{k+1})>0$ for some $v<u:=\max\{u_2,\ldots,u_{k+1}\}$. By relabeling if necessary, assume that $u_2\geq u_3\geq \cdots \geq u_{k+1}$. Consider the scenario that there are $k$ bidders where Bidder 1 has value $u$ and Bidder $i$ has value $u_i$ for $2\leq i\leq k$. Then $u=\max\{u_2,\ldots,u_{k+1}\}=\max\{u_2,\ldots,u_{k}\}$. If Bidder $1$ bids truthfully, his utility payoff is 
\[\int_0^u x_1(z,u_2,\ldots,u_{k})dz=0\]
by the induction hypothesis. However, if Bidder $1$ deviates by bidding $u$ himself and creating a Sybil that bids $u_{k+1}$, then his utility payoff is 
\begin{equation*}
    \begin{split}
       & \;\;\;\;\int_0^{u}x_1(z,u_2,\ldots,u_{k+1})dz+\left(u\cdot x_{k+1}(u,u_2,\ldots,u_{k+1})-p_{k+1}(u,u_2,\ldots,u_{k+1})  \right)\\ 
& \geq \int_v^{u}x_1(v,u_2,\ldots,u_{k+1})dz+\left(u_{k+1}\cdot x_{k+1}(u,u_2,\ldots,u_{k+1})-p_{k+1}(u,u_2,\ldots,u_{k+1})  \right)\\ 
       & \geq  (u-v)x_1(v,u_2,\ldots,u_{k+1})+U_{k+1}(u,u_2,\ldots,u_{k+1})\\
       &>   0,
    \end{split}
\end{equation*}
which contradicts Sybil-proofness.

To summarize, we have shown the allocation rule shall reward the item to the highest bidder (when there is a tie among bids, the symmetry assumption implies uniform tie-breaking rule) and therefore, by incentive compatibility, the mechanism is a second-price auction. 
\end{proof}
Next, we will see that the axioms in our characterization are logically independent, and dropping any of the axioms give additional mechanisms.
If we drop Sybil-proofness, then any Symmetric, Monotone, and Non-wasteful allocation rule with ``Myerson payments'' will satisfy all the axioms. 
A second-price auction with a participation cost that does not depend on (or increases with) the number of bidders satisfies all axioms but payment normalization.
The generalized proportional mechanism
$$x_i^{\mathcal{N}}(u)=\frac{f(u_i)}{\sum_{j\in \mathcal{N}} f(u_j)},\quad p_i(u)=cu_i,$$
for a convex increasing function $f$ with $f(0)=0$ and $c>0$ satisfies all the axioms but Incentive Compatibility. If we drop Non-wastefulness, then the second price auction with a reservation price that does not depend on (or increases with) the number of bidders will satisfy all other axioms. We give another interesting example which does not satisfy Non-wastefulness and satisfies all the other axioms in the following.
\begin{example}\label{example}
Consider the allocation rule 
\[x_i^{\mathcal{N}}(u)=2^{-|\mathcal{N}|}\frac{u_i}{\sum_{j\in \mathcal{N}} u_j}\]
and the payment rule given by the ``Myerson payments''.
Then the mechanism satisfies Payment Normalization, Symmetry, Incentive Compatibility, Monotonicity, and Sybil-proofness. The only nontrivial property to check is Sybil-proofness. 

Suppose that there are $N$ bidders with values $u$. If they all bid truthfully and we define $C:=\sum_{j>1}u_j$, Bidder $1$ has utility
payoff,%
\[
U_{1}=2^{-N}\int_0^u\frac{z}{z+C}dz\]%
Suppose that Bidder $1$ deviates by creating a
Sybil bidding $x$. Then his utility payoff will be%

\begin{eqnarray*}
    2^{(N+1)}U^{\prime }_1&=&\int_0^u\frac{z}{z+x+C}dz+\frac{x(u_1-x)}{u_1+x+C}+\int_0^x\frac{z}{z+u_1+C}dz
\end{eqnarray*}

We show that $U_1'\leq U_1$ for any $x\geq0$, i.e. that bidder $1$ has no incentive to deviate by creating Sybils. We have  
\begin{align*}
2^{N+1}(U_1-U_1')&=2\int_0^{u_1}\frac{z}{z+C}dz-\int_0^{u_1}\frac{z}{z+x+C}dz-\frac{x(u_1-x)}{u_1+x+C}-\int_0^x\frac{z}{z+u_1+C}dz\\&\geq\int_0^{u_1}\frac{z}{z+C}dz-\frac{x(u_1-x)}{u_1+x+C}-\int_0^x\frac{z}{z+u_1+C}dz
\end{align*}
Note that the expression on the right hand side of the inequality is positive on the boundary i.e. at $x=0$ and at $x=\infty$ (as it diverges to $\infty$). The expression on the right hand side of the inequality has one critical point at $x=u_1$, and the expression is positive at that critical point. Therefore the difference in utility is bounded from below with a strictly positive expression. 
\end{example}
\section{Bayesian Sybil-Proofness}\label{Bayes}
In the previous section, we introduced a strong notion of Sybil-proofness, which ensures that no agent can benefit from creating Sybils, regardless of the strategies employed by other participants. However, we may wish to relax this requirement and consider a weaker, \emph{Bayesian} notion of Sybil-proofness, where agents hold prior beliefs about the number of other participants and/or their valuations. In this Bayesian setting, we require that creating Sybils is not profitable \emph{in equilibrium}, that is, if no agent creates Sybils and all other participants truthfully report their valuations, then no agent can increase their payoff by creating Sybils instead of using a single identity.

In what follows, we show that there exist mechanisms that satisfy non-wastefulness, symmetry, incentive-compatibility, and a Bayesian notion of Sybil-proofness, yet do not allocate the item to the highest-value bidder with probability one. 
The mechanism allocates the object through a lottery among high value bidders above a threshold set ex-ante (and through an auction if there are no high value bidders). In the mechanism, the probability of not allocating to the highest value bidder conditional on the number of bidders being $n$ is bounded from below by a constant independent of $n$. Moreover, the mechanism generates higher interim payoff for bidders than the second price auction.

Technically, for the Bayesian version of our model we need to enhance it by a prior which is a probability distribution $\mu$ over $\bigcup_{\mathcal{N}\subset\mathbb{N},|\mathcal{N}|<\infty}\mathbb{R}_+^{\mathcal{N}} $. Thus, both, valuations and the number of real bidders, can be unknown. Denote by $\mu(\cdot|v_i)$ the posterior of bidder $i$ if he is part of the mechanism and has type $v_i$. The posterior is a probability distribution over $\bigcup_{i\in\mathcal{N}\subset\mathbb{N}}\mathbb{R}_+^{\mathcal{N}} .$
 Subsequently, we focus on the independent private value setting, where the conditional distribution of values, conditional on the set of agents in the mechanism being $\mathcal{N}$, is $F^{\mathcal{N}}$ for a distribution $F$ on $\mathbb{R}_+$ that is twice differentiable and strictly increasing on its support (assumed to be $(0,\bar{v})$ for some $\bar{v} > 0$) with $F(0)=0$.

We have the standard notion of Bayesian incentive compatibility:\newline

\noindent{\bf Bayesian Incentive Compatibility}: For each  $v\in\mathbb{R}_+^{\mathbb{N}}$, and each agent $i\in\mathbb{N}$ and bid $u_i\geq 0$ we have
$$E_{\mu(\cdot|v_i)}[v_ix_i^{\tilde{\mathcal{N}}}(v_i,v_{-i})-p^{\tilde{\mathcal{N}}}_i(v_i,v_{-i})]\geq E_{\mu(\cdot|v_i)}[v_ix_i^{\tilde{\mathcal{N}}}(u_i,v_{-i})-p_i^{\tilde{\mathcal{N}}}(u_i,v_{-i})].$$
However, our example will also satisfy the stronger notion of dominant-strategy incentive compatibility from before.

Moreover, we want the mechanism to be Bayesian Sybil-proof.
We consider a rather strong Bayesian notion of Sybil-proofness, which is similar to the one in \cite[]{false_name_vcg_bayesian,mazorra2023}.  The notion requires that creating Sybils does not benefit a bidder in expectation if no-one else uses a Sybil in equilibrium and reports their true valuation.\newline

\noindent {\bf Bayesian Sybil-proofness}: For each $v\in\mathbb{R}_+^{\mathbb{N}}$, each $i\in\mathbb{N}$ and all $J\subset\mathbb{N
}$ and $u\in\mathbb{R}_+^{J}$
\begin{equation*}
    E_{\mu(\cdot|v_i)}[v_ix_i^{\mathcal{N}}(v)-p_i^{\mathcal{N}}(v)]\geq E_{\mu(\cdot|v_i)}\left[\sum_{j\in (J\setminus\mathcal{N})\cup i}v_ix_j^{\mathcal{N}\cup J}(v_i,u_{J\setminus\mathcal{N}},v_{-i})-p_j^{\mathcal{N}\cup J}(v_i,u_{J\setminus\mathcal{N}},v_{-i})\right].
\end{equation*}

We consider a lottery mechanism where the item is allocated with equal chances among agents that pay an entry price and the ``ticket price" is increasing in the number of participants. 

\begin{example}\label{example2}
 Let $Z_{n-1}$ be the maximum of an i.i.d sample of size $n-1$ drawn from the distribution $F$. For each integer $n\ge2$, let $y^\star= \min\{y:(\overline{v}-x)f(x)\leq 1-F(x),\,\forall x\in[y,\overline{v}]\}$, define a number $a_n$  
\[
a_n \;=\; \max\{y^\star,E[Z_{n-1}],F^{-1}\left(\frac{n-2}{n-1}\right)\}.
\]
which we call $a_n$ the ``\emph{ticket price}". The {\bf lottery mechanism with increasing ticket price} for $n$ bids is defined as follows:
\begin{itemize}
\item If all or all but one bids are below $a_n$, the item is allocated by a standard second-price auction, and the winner pays the second-highest bid with uniform tie-breaking rule.
\item If at least $m \ge 2$ bids are at least $a_n$, then each of those $m$ bidders wins the item with probability $1/m$, and the winner's payment is $a_n$.\footnote{By risk neutrality, we could equivalently charge all $m$ bidders (and not only the winner) $a_n/m$.}
\end{itemize}

\end{example}
\smallskip

\noindent

Observe that, as long as $y^\star<\overline{v}$,\footnote{Requiring $y^\star < \overline{v}$ means that beyond $y^\star$, i.e., in the upper tail of $F$, we have 
\begin{equation}
    (\overline{v}-x) f(x) \leq 1 - F(x), \quad \forall x \in [y^\star, \overline{v}].
\end{equation}
Equivalently, in terms of the hazard rate $h_F(x)$, this condition is expressed as
\begin{equation}
    h_F(x) \leq \frac{1}{\overline{v}-x}, \quad \forall x \in [y^\star,\overline{v}].
\end{equation}
In other words, sufficiently close to the endpoint $\overline{v}$, the tail of $F$ is no heavier than that of the uniform distribution on $[0,\overline{v}]$.} for each \(n\), this mechanism differs from the second-price auction with positive probability, since $\Pr[v\geq a_n]>0$ as $a_n<\bar{v}$.
Moreover, the mechanism is dominant-strategy- so in particular Bayesian-incentive compatible as it uses a monotonic allocation rule with Myerson payments. Furthermore, the mechanism is non-wasteful, as it almost surely allocates the item to a buyer, and it is symmetric. Moreover, we can show that under mild conditions, the mechanism is Bayesian Sybil-proof. Before we establish Sybil-proofness, we highlight two important properties of this mechanism. First, observe that if at least two agents bid at or above $a_n$, then the item is allocated by a lottery among those qualifying agents at the posted price $a_n$. In contrast, a second-price auction would allocate the item with probability $1$ to the highest of these agents at a price equal to the second-highest qualifying bid. Hence, with non-negligible probability (specifically $1 - a_n^{\,n-1}$), the outcome and payment under our mechanism differ from those of a standard second-price auction. 

A second notable aspect is that when multiple bidders qualify (i.e.\ have $v_j \ge a_n$), the good is allocated randomly at a fixed price among them. This often yields higher ex-interim utility than a second-price auction, as we will establish in Lemma~\ref{lemma:monotone}.

To see how frequently a lottery with at least two agents occurs, let $X$ be the number of agents whose valuations are at least $a_n$. In the special case that each valuation of bidders is drawn independently from the uniform distribution on the unit interval, the random variable $X$ follows a binomial distribution with parameters $n$ and $(1 - a_n)$. Thus the probability that at least two agents qualify is
\[
\Pr(X \ge 2)
\;=\;
1 \;-\; a_n^{\,n} \;-\; n\,\bigl(1 - a_n\bigr)\,a_n^{\,n-1}\approx 1-2/e.
\]
In other words, with probability $\approx 1-2/e$ the mechanism runs a lottery among at least two agents.

To prove Sybil-Proofness, using affine transformations, it is sufficient to show it for $F$ strictly increasing with support $[0,1]$. The following Lemma is proved in the Appendix.
\begin{lemma}\label{lemma:monotone} Suppose agents have valuations drawn from a distribution $F$ strictly increasing with support $[0,1]$.
 When all agents report truthfully, any agent $i$'s ex-interim expected  payoff conditional on knowing the number of submitted bids under the lottery with increasing ticket price mechanism is decreasing in the ticket price $a\in[y^\star,1]$. Moreover, any agent's ex-interim expected payoff is at least as high as in a standard second-price auction.
\end{lemma}
With the lemma, we can prove that the mechanism is Bayesian Sybil-proof.
\begin{theorem}\label{thm:general}
The lottery mechanism with increasing ticket prices is Bayesian Sybil-proof. In particular, if $y^\star<\overline{v}$, then there exists a mechanism that is payment normalized, non-wasteful, symmetric, incentive compatible, and Bayesian Sybil-proof that is non-equivalent to the second-price auction.
\end{theorem}

\begin{proof}
We establish the result for the case that the number of bidders is known by the bidders. A fortiori this also establishes the result if the number of bidders follows a non-degenerate prior.
An agent with value $v < a_n$ cannot profit by splitting into multiple identities: the mechanism behaves like a second-price auction, where multi‐bid strategies do not help since the second-price auction is Sybil-proof.  Hence consider an agent with value $v \geq a_n$.

Suppose the agent creates $k = k_1 + k_2$ Sybils,\footnote{Subsequently, we use the term ``creates $k$ Sybils" for the case where the agent bids from $k$ accounts (his ``original" account and $k-1$ ``fake" accounts).} with $k_1$ Sybils bidding a value greater or equal $a_{n-1 + k}$ and $k_2$ Sybils bidding a value below $a_{n-1 + k}$.  Because $\{a_n\}$ is increasing and $a_n\geq y^\star$, Lemma~\ref{lemma:monotone} rules out profitable manipulations when $k_1=1$ and $k_2\ge 2$.  

Now consider $k_1 \ge 2$.  One checks (as in the proof of Lemma~\ref{lemma:monotone}) that the agent's ex‐interim payoff from this Sybil strategy is
\[
U(v,k_1,k_2,a_{n-1 + k}):
\;=\;
\bigl(v - a_{n-1 + k}\bigr)
\sum_{m=0}^{n-1}
  \binom{n-1}{m}\,
  \bigl(1 - F(a_{n-1 + k})\bigr)^m\,
  \bigl(F(a_{n-1 + k})\bigr)^{\,n-1 - m}
  \frac{k_1}{m + k_1}.
\]
Since $0\le \frac{k_1}{m+k_1}\le 1$ for every $m\in\{0,\dots,n-1\}$, we have
\begin{equation*}
0 \le \sum_{m=0}^{n-1}\binom{n-1}{m}\bigl(1-F(a_{n-1+k})\bigr)^{m}F(a_{n-1+k})^{\,n-1-m}\frac{k_1}{m+k_1}\le 1 .
\end{equation*}
If $a_{n-1+k}>v$, then $U(v,k_1,k_2,a_{n-1+k})\le 0$. Otherwise, if $a_{n-1+k}\le v$, then
\begin{equation*}
\begin{aligned}
U(v,k_1,k_2,a_{n-1+k})
&\le (v-a_{n-1+k})\sum_{m=0}^{n-1}\binom{n-1}{m}\bigl(1-F(a_{n-1+k})\bigr)^{m}F(a_{n-1+k})^{\,n-1-m}\\
&= v-a_{n-1+k},
\end{aligned}
\end{equation*}
since the binomial weights sum to $1$. Hence, in either case,
\begin{equation*}
U(v,k_1,k_2,a_{n-1+k})\le \max\{v-a_{n-1+k},0\}\le v-a_n,
\end{equation*}
where the last inequality uses that $\{a_t\}$ is increasing (so $a_{n-1+k}\ge a_n$) and that we are in the case $v\ge a_n$.
On the other hand,
\begin{equation*}
\begin{aligned}
v-a_n
&\le (v-\mathbb{E}[Z_{n-1}\mid Z_{n-1}\le v])\,\Pr[Z_{n-1}\le v]\\
&\le U(v,a_n),
\end{aligned}
\end{equation*}
where $U(v,a_n)$ denotes the interim payoff under the lottery mechanism with ticket price $a_n$ for the bidder with valuation $v$.
The second inequality is deduced by Lemma~\ref{lemma:monotone} since $(v-\mathbb{E}[Z_{n-1}\mid Z_{n-1}\le v])\,\Pr[Z_{n-1}\le v]$ is the ex-interim expected payoff in the second-price auction.
The first inequality holds since
\begin{equation*}
\begin{aligned}
v-(v-\mathbb{E}[Z_{n-1}\mid Z_{n-1}\le v])\,\Pr[Z_{n-1}\le v]
&= v\,\Pr[Z_{n-1}>v]+\mathbb{E}[Z_{n-1}\mid Z_{n-1}\le v]\,\Pr[Z_{n-1}\le v]\\
&\le \mathbb{E}[Z_{n-1}]
\le a_n,
\end{aligned}
\end{equation*}
where the last inequality is by construction of $a_n$.

Combining the two displays yields $U(v,k_1,k_2,a_{n-1+k})\le U(v,a_n)$.
Hence Sybil strategies yield lower expected payoffs, proving the mechanism is Bayesian Sybil‐proof.
\end{proof}

\section{The Revelation Principle for Sybils}\label{sec:revelation_sybils}
In principle, it might not be bad in itself that participants in a mechanism use Sybils, as long as the mechanism achieves good outcomes. However, it turns out that the possibility to sybil fundamentally constrains what outcomes can be achieved by \emph{any} mechanism. This is a variant of the ``revelation principle". If an arbitrary symmetric indirect mechanism implements an objective in such a way that bidders might use Sybils as part of their strategies, then there is necessarily a corresponding direct mechanism achieving the same objective in a Sybil-proof way. We formally establish this revelation principle in this section, thereby justifying our focus on direct mechanisms in the previous sections.

We define (indirect) mechanisms with variable population to be families $(\mathcal{M}^{\mathcal{N}},x^{\mathcal{N}},p^{\mathcal{N}})_{\mathcal{N}\subset\mathbb{N},|\mathcal{N}|<\infty}$ of mechanisms for each finite $\mathcal{N}\subset\mathbb{N}$ set of players, where $\mathcal{M}^{\mathcal{N}}=\times_{i\in\mathcal{N}}\mathcal{M}^{\mathcal{N}}_i$ are strategy (or message) spaces, $x^{\mathcal{N}}:\mathcal{M}^{\mathcal{N}}\to\Delta(\mathcal{N})$ is an allocation function and $p^{\mathcal{N}}:\mathcal{M}^{\mathcal{N}}\to\mathbb{R}_+^{\mathcal{N}}$ is a payment function. We call a mechanism symmetric if there is a message space $\mathcal{M}$ such that for all finite $\mathcal{N}\subset\mathbb{N}$ and all $i\in\mathcal{N}$ we have $\mathcal{M}^{\mathcal{N}}_i=\mathcal{M}$ and for all permutations $\pi$ of $\mathcal{N}$ we have $x^{\mathcal{N}}_{\pi(i)}((s_{\pi(i)}))=x^{\mathcal{N}}_i(s)$ and $p^{\mathcal{N}}_{\pi(i)}((s_{\pi(i)}))=p^{\mathcal{N}}_i(s)$. The {\bf corresponding mechanism with Sybils} for a symmetric mechanism $(\mathcal{M},\{x^{\mathcal{N}},p^{\mathcal{N}}\}_{\mathcal{N}\subset\mathbb{N},|\mathcal{N}|<\infty})$ is given by $(\tilde{\mathcal{M}},\{\tilde{x}^{\mathcal{N}},\tilde{p}^{\mathcal{N}}\}_{\mathcal{N}\subset\mathbb{N},|\mathcal{N}|<\infty})$ with $\tilde{\mathcal{M}}=\bigcup_{n\in\mathbb{N}}\mathcal{M}^n$ and $$\tilde{x}_i^{[N]}(s)=\sum_{j=1 }^{dim(s_i)} x^{[\tilde{N}_N]}_{\tilde{N}_i+j}(vec(s) ),\quad\tilde{p}_i^{[N]}(s)=\sum_{j=1 }^{dim(s_i)} p^{[\tilde{N}_N]}_{\tilde{N}_i+j}(vec(s) ),$$ where $\tilde{N}_i:=\sum_{j=1}^{i-1}dim(s_j)$ and $vec(s)$ denotes the concatenation of $s_1,s_2,\ldots,s_N$. We extend the outcome function to sets $\mathcal{N}$ that are not of the form $[N]=\{1,\ldots,N\}$ by requiring the Sybil mechanism to be symmetric.

Generally, subtleties arise around the players reasoning about the true number of bidders. Subsequently, we assume that bidders cannot condition their message on the number of messages the Sybil mechanism receives (because they do not know it when sending the message to the mechanism). For the dominant strategy implementation case, we can then operate in a ``belief-free" model: sending message $m_i(v_i)$ should be weakly-dominant for a bidder $i$ with value $v_i$, for all possible sets of participants in the Sybil mechanism. For the Bayesian setting bidders will also hold beliefs about the number (or more generally set of) bidders that participate in the Sybil mechanism, as we further discuss below.

A social welfare function for a variable population model is a function $f:\mathcal{D}\to\Delta(\mathbb{N})\times\mathbb{R}_+^{\mathbb{N}}$ with $f_i(v_1,\ldots)=(0,0)$ for $v_i=0$ where $\mathcal{D}\subseteq c_{00}(\mathbb{R}_+)$ is a subset of the set of $\mathbb{R}_+$-valued sequences with finitely many non-zero elements. The first element of $f_i(v)$ corresponds to the probability with which bidder $i$ gets the item, and the second element corresponds to the payment bidder $i$ makes. A social welfare function yields a direct mechanism where $\mathcal{M}=\mathbb{R}_+$ and $(x^{\mathcal{N}}(v),p^{\mathcal{N}}(v))=f(v|_{\mathcal{N}},0|_{\mathbb{N}\setminus\mathcal{N}})|_{\mathcal{N}}$.

We say that a (Sybil) mechanism $(\tilde{\mathcal{M}},\tilde{x},\tilde{p})$ implements  $f$ in weakly dominant strategies iff there is a message profile $(m_i(\cdot))_{i\in \mathbb{N}}$ such that for each $v=(v_i)\in c_{00}(\mathbb{R}_+)$,
\begin{enumerate}
\item for each $i\in \text{supp}(v)$ and each finite $\mathcal{N}\subset\mathbb{N}$ with $i\in\mathcal{N}$ reporting $m_i(v_i)$ to mechanism $(\tilde{\mathcal{M}},\tilde{x},\tilde{p}) $ is a weakly dominant strategy,
\item  for each $v$ we have $f(v)=(\tilde{x}^{\text{supp}(v)}(m(v)),\tilde{p}^{\text{supp}(v)}(m(v)))$.
\end{enumerate}
We have the following proposition which is similar to an observation by~\cite[]{false_name}:
\begin{proposition}[Revelation Principle for Sybils in Dominant strategies]
    Let $(\mathcal{M},x,p)$ be a symmetric mechanism and $(\tilde{\mathcal{M}},\tilde{x},\tilde{p})$  be the corresponding mechanism with Sybils. If $(\tilde{\mathcal{M}},\tilde{x},\tilde{p})$ implements $f$ in weakly dominant strategies, then the direct mechanism $f$ is Sybil-proof and incentive compatible.
\end{proposition}
\begin{proof}
Suppose not. Then there is a bidder $i$ with value $v_i$ that is better off from lying or from sybilling in the direct mechanism at some set $\mathcal{N}$ of participants who submit bids $v_{-i}$. In the first case, if reporting $u_i$ instead of $v_i$ would yield higher pay-off if the other bids in the mechanism are $v_{-i}$, then the standard revelation principle argument applies: sending message $m_i(u_i)$ would yield a higher payoff in the Sybil mechanism than sending $m_i(v_i)$ in case the other participants in the Sybil mechanism send messages $m_j(v_j)$. This contradicts $m_i(v_i)$ being a dominant strategy if $i$ has value $v_i$. In the second case, reporting $v_i$ and creating a Sybil $i'$ bidding $u_{i'}$ yields a higher payoff  if the other bids in the mechanism are $v_{-i}$. But then sending message $m=(m_i(v_i),m_{i'}(u_{i'}))$ to the Sybil mechanism would yield higher payoff than sending message $m_i(v_i)$ when other bidders send messages $m_j(v_j)$, contradicting $m_i(v_i)$ being a weakly dominant strategy for bidder $i$ in the Sybil mechanism.
\end{proof}
To formulate the Bayesian version of a Sybil revelation strategy we add a prior to the model which is a probability measure $\pi$ over $c_{00}(\mathbb{R}_+)$.\footnote{A natural case would for example be the i.i.d. private value setting with an unknown number of bidders where we have a probability  $\lambda(N)$ of $N$ players participating in the Sybil mechanism for each of which the value is drawn i.i.d. from a distribution $F$.}

We say that a (Sybil) mechanism $(\tilde{\mathcal{M}},\tilde{x},\tilde{p})$ implements $f$ in Bayes-Nash equilibrium iff there is  is a strategy profile $(m_i(\cdot))_{i\in \mathbb{N}}$ such that for each $v\in c_{00}(\mathbb{R}_+)$,
\begin{enumerate}
\item for each $i\in \text{supp}(v_i)$ and each $m'\in M$ we have 
$$E[v_ix_i(m(v))-p_i(m(v))|v_i]\geq E[v_ix_i(m_i',m_{-i}(v_{-i}))-p_i(m_i',m_{-i}(v_{-i}))|v_i]$$ 
\item   for each $v$ we have $f(v)=(\tilde{x}^{\text{supp}(v)}(m(v)),\tilde{p}^{\text{supp}(v)}(m(v)))$.
\end{enumerate}
\begin{proposition}[Revelation Principle for Sybils for Bayes-Nash equilibrium]
    Let $(\mathcal{M},x,p)$ be a symmetric mechanism and $(\tilde{\mathcal{M}},\tilde{x},\tilde{p})$  be the corresponding mechanism with Sybils. If $(\tilde{\mathcal{M}},\tilde{x},\tilde{p})$ implements $f$ in Bayes-Nash equilibrium, then the direct mechanism $f$ is Bayesian incentive compatible and Bayesian Sybil-proof.
\end{proposition}
\begin{proof}
Suppose not. Then there is a bidder $i$ with value $v_i$ that is better off from lying or from sybilling in the direct mechanism given that everyone else bids truthfully and does not Sybil. In the first case, if reporting $u_i$ instead of $v_i$ would yield higher expected pay-off given truthful bidding of the other bidders, then messaging $m_i(u_i)$ instead of $m_i(v_i)$ to the Sybil mechanism if other bidders follow the equilibrium bidding strategy would be a better response contradicting $m_i(\cdot)$ being a best response. Similarly, if creating a Sybil $i'$ and bidding $u_{i'}$ from it makes bidder $i$ better of in expectation in the direct mechanism, then sending message $(m_i(v_i),m_{i'}(u_{i'}))$ to the Sybil mechanism is a better response to the other bidders' strategies than sending message $m_i(v_i)$ contradicting the fact that $(m_j(\cdot))$ is a Bayes-Nash equilibrium of the direct mechanism.
\end{proof}

\section{Extensions}
We briefly discuss other extensions of our baseline model and whether our main (im)possibility result extends to these settings.
\subsection{Multiple Goods with Unit-Demand}
 A natural question is whether Theorem~\ref{maintheorem} extends to the case of multiple goods. With multiple goods, however, there might not even be an efficient, symmetric, Incentive Compatible, and Sybil-proof direct mechanism if some bidders demand multiple units.\footnote{The VCG mechanism is for example not always Sybil-proof, see \cite[]{false_name}.} One simple setting where there are multiple non-wasteful, symmetric, Incentive Compatible, and Sybil-proof direct mechanisms is the case of unit demand bidders. 

More precisely, we assume there are $m$ identical copies of an item, and each agent has \textit{unit-demand} valuations: an agent has a value for receiving for up to one unit but no additional value for receiving more than one. 

We extend the notion of \textbf{non‑wastefulness} as follows. For all finite $\mathcal{N}\subset\mathbb{N}$ ,
\begin{enumerate}
\item
    $0\leq x^{\mathcal{N}}_i(v)\leq 1\quad\text{for all } i\in\mathcal{N}$,
\item defining $n:=|\mathcal{N}|$, we have
    $$\begin{cases}

        \displaystyle\sum_{i\in\mathcal{N}} x^{\mathcal{N}}_i(v)=m, & \text{if } n\ge m,\\[6pt]

        x_i(v)=1 \;\text{for every } i\in{\mathcal{N}}, & \text{if } n<m.

    \end{cases}$$
    \end{enumerate}
Thus, we always allocate all $m$ items (so none go unused), and we never allocate more than one item to any bidder, since doing so would be wasteful given unit‑demand valuations. We refer to this as \textbf{non‑wastefulness for unit‑demand bidders}.

The {\bf uniform-price mechanism} is defined as follows:
\label{def:uniform-price}
Let $n:=|\mathcal{N}|$ be the number of submitted bids.
For the bid profile $v$, write the order statistics as  
$v_{(1)}\ge v_{(2)}\ge\dots$.  Define the  clearing price
\[
p(v)\;:=\;
\begin{cases}
  v_{(m+1)}, & \text{if } n>m,\\[4pt]
  0,         & \text{if } n\le m.
\end{cases}
\]
The mechanism awards one item to each of the bidders whose bid is at least $v_{(m)}$.  
        If several bidders tie at the cutoff, we break ties uniformly at random so that exactly $m$ items are assigned when $n\ge m$.
        Every allocated bidder pays the uniform price $p(v)$; all other bidders pay~$0$.

As the uniform price mechanism is a natural generalization of the second price auction to this setting, it is natural to ask whether
it is the unique mechanism that is symmetric, Incentive Compatible, Sybil‑proof, and non‑wasteful for unit‑demand bidders.

The answer is no, as we demonstrate with the following counterexample.

\begin{example}
Let $m$ be the number of items. Let $v$ be the bid profile and, w.l.o.g., assume the bids are ordered in descending order. In case that $n\le m$ or $v_{m+1}< v_1/m$ allocate the items as a uniform‑price auction. Otherwise, allocate the $m$ items by selecting, uniformly at random, a subset of $m$ agents from the set $[m+1]$ and giving one item to each selected agent. Consequently, every agent receives an item with probability $\tfrac{m}{m+1}$.
\end{example}

Clearly this allocation rule is monotonic, therefore, the Myerson payments make the mechanism Incentive Compatible. The mechanism is non‑wasteful, and symmetric by construction.

The following statement is proved in the appendix.
\begin{proposition}\label{MultiUnit}
The mechanism is Sybil‑proof.
\end{proposition}
\subsection{Super-additive Utility}\label{sec:superadditive}
 Our main theorem can be extended to a broader class of preferences. Specifically, suppose each agent \(i\) has a private type \(\theta_i\). If the mechanism allocates an amount \(x_i\) of the good to agent~\(i\) with type $\theta_i$ and charges a payment \(p_i\), then agent~\(i\)'s utility is 
\[
  U(\theta_i,x_i, p_i) \;=\; v(\theta_i, x_i) \;-\; p_i,
\]
where \(v: \mathbb{R}_+ \times [0,1] \to \mathbb{R}_+\) is a strictly increasing in $\mathbb R_{>0}\times (0,1]$ and twice differentiable function with $v(\theta_i,0)=0$ and $v(0,x_i)=0$. Thus, as in previous sections, agents are ex-ante symmetric (otherwise, discussions of Sybils make little sense), but their interim utility depends on a private signal. We make two assumptions on the valuation function. We require value to be convex in types:
\newline

\noindent \label{A4}{\bf Type-convexity:} for every $x\in[0,1]$ the function $\theta_i\mapsto v(\theta_i,x)$ is convex.\newline

The second assumption is
\newline

\noindent
{\bf Superadditive interim utility:} for every \(\theta_i\ge0\) the function
          \(x\mapsto\partial_{1}v(\theta_i,x)\) is superadditive on \([0,1]\):
          \[
           \partial_{1}v(\theta_i,x)+\partial_{1}v(\theta_i,y)\;\le\;\partial_{1}v(\theta_i,x+y),
            \quad 0\le x,y,\;x+y\le1.          \]
\newline
Note that Myerson payments in this setting are given by
$$p(\theta)=\int_0^{x_i(\theta)}\partial_{2}v(\theta_i,x_i)dx_i=v(\theta_i,x_i(\theta))-\int_0^{\theta_i}\partial_{1}v(\tilde\theta_i,x_i)d\tilde\theta_i.$$
Thus, the interim utility under Myerson payments for bidder $i$ with type $\theta_i$ is 
$$U(\theta_i,x(\theta),p(\theta))=\int_0^{\theta_i}\partial_{1}v(\tilde\theta_i,x)d\tilde\theta_i.$$
Super-additivity of $\partial_{1}v(\tilde\theta_i,x)$ therefore implies that the interim utility is super-additive in consumption which will turn out to make sybilling a worthwhile strategy in mechanisms other than auctions. Type-convexity on the other hand guarantees that the interim utility is increasing in types which is needed for implementation (in dominant strategies).

\label{A4}

 Here are some examples that are part of the class of preferences:
\begin{example}
	Let \(\theta_i\in\mathbb R_{+}\) and \(x_i\in[0,1]\).
	\begin{enumerate}
		\item[] \textbf{Power family}  
		      \[
		          v(\theta_i,x_i)=\theta_i^{\alpha}x_i^{\beta}, 
		          \qquad \alpha\geq1,\;\beta\geq1 .
		      \]
		\item[] \textbf{Exponential–power family}  
		      \[
		          v(\theta_i,x_i)=\theta_i^{\alpha}\bigl(e^{\kappa x_i}-1\bigr), 
		          \qquad \alpha\geq1,\;\kappa>0 .
		      \]
  \end{enumerate}
\end{example}

The notions of payment normalization, symmetry, non-wastefulness, incentive compatibility and Sybil-proofness can be adapted in a straightforward way to this setting.
We then have the following extension of Theorem~\ref{maintheorem}, which is proved in the appendix:
\begin{proposition}\label{thm:superadditive}
For valuations that are type-convex and induce superadditive interim utility, the generalized second-price auction with symmetric tie-breaking is the only direct mechanism that simultaneously satisfies payment normalization, non-wastefulness, symmetry, incentive compatibility, and Sybil-proofness.
\end{proposition}

\section{Conclusion}
Our main theorem can be interpreted as a trilemma for the design of non-wasteful mechanism in the presents of Sybils: any such mechanism is either not Incentive-Compatible, not Sybil-proof or centralizing.  In light of our main theorem,
if we aim to design a practical mechanism that does not always assign the good to the highest-value bidder, then we need to relax Sybil-proofness, Incentive Compatibility, or Non-wastefulness. We discuss some possible directions to relax these axioms. 

In some contexts, being wasteful seems to be very undesirable,\footnote{This is for example the case in our running example of block proposing, as blockchains shouldn't produce empty blocks.} whereas in others it could be tolerable. Sybil-proofness is easier to satisfy when the probability of not allocating the good to anyone is increasing in the number of bidders, as Examples~\ref{example} shows. However, the mechanism in Example~\ref{example} can hardly be of practical use, as the probability of wasting the good converges to $1$ when there are many bidders. Thus, we ask:
\begin{question}
    Does there exist a Symmetric, Incentive Compatible, and Sybil-proof mechanism such that the probability of allocating the good to a bidder is larger than some positive constant?
\end{question}
Practically, relaxing Incentive Compatibility is a natural direction to explore. Our impossibility result implies that no non-trivial distributional objectives can be achieved by any Sybil-proof mechanism. However, that does not exclude the possibility of approximately satisfying objectives. Thus, we ask
\begin{question}
   How well can (worst-case) equilibria of Sybil-proof mechanisms, such as the proportional mechanism, approximate distributional objectives?
\end{question}
In many practical scenarios, creating Sybils is cheap but not entirely free \cite[]{mazorra2023}. If we assume that creating Sybils has a small constant cost for all bidders, will there be a substantially larger class of mechanisms satisfying all the desirable properties? In particular, we are interested in which distributional objectives we can achieve.
\begin{question}
   When creating Sybils is costly, what is the class of non-wasteful, symmetric, IC and Sybil-proof mechanisms?
\end{question}
In Section~\ref{Bayes}, we have seen that relaxing Sybil-proofness to a Bayesian version extends the set of possible mechanisms. There are arguably many sensible notions of Bayesian Sybil-proofness that can be explored. Moreover, it would be interesting to classify the design space of Bayesian Sybil proof mechanisms.

\begin{question}
What is the class of non-wasteful, symmetric, Bayesian IC and Bayesian Sybil-proof mechanisms?
\end{question}

\bibliographystyle{RM.bst}
\bibliography{refs}

\appendix

\section{Lemmas for the proof of Theorem \ref{thm:general}}

\begin{lemma}\label{lemma:Uva_equiv}
Let \(F\) be a distribution function on \([0,\infty)\), and fix \(0 \le a \le v\) such that $F(a)<1$. 
Define
\[
U(v,a)
\;=\;
\int_0^a F(x)^{n-1}\,dx
\;+\;
(v-a)\sum_{m=0}^{n-1}
  \binom{n-1}{m}\,
  \bigl[1-F(a)\bigr]^m\,
  F(a)^{\,n-1-m}\,
  \frac{1}{m+1}.
\]
Then
\[
U(v,a)
\;=\;
\int_{0}^{a} \bigl[F(x)\bigr]^{n-1}\,dx
\;+\;
(v - a)\,\frac{1 - \bigl[F(a)\bigr]^n}{\,n\,\bigl[1 - F(a)\bigr]}\,.
\]
\end{lemma}
\begin{proof}
We only need to simplify
\begin{equation*}
S \;:=\;\sum_{m=0}^{n-1}\binom{n-1}{m} (1-F(a))^m F(a)^{\,n-1-m}\frac{1}{m+1}.
\end{equation*}
Use the identity $\frac{1}{m+1}\binom{n-1}{m}=\frac{1}{n}\binom{n}{m+1}$ to get
\begin{align*}
S
&= \frac{1}{n}\sum_{m=0}^{n-1}\binom{n}{m+1} (1-F(a))^m F(a)^{\,n-1-m}
= \frac{1}{n}\sum_{k=1}^{n}\binom{n}{k} (1-F(a))^{k-1} F(a)^{\,n-k} \\
&= \frac{1}{n(1-F(a))}\sum_{k=1}^{n}\binom{n}{k} (1-F(a))^{k} F(a)^{\,n-k}
= \frac{1}{n(1-F(a))}\Bigl(1 - F(a)^n\Bigr)
= \frac{1-F(a)^n}{n(1-F(a))}.
\end{align*}
Substituting back into the definition of $U(v,a)$ yields the claim. 
\end{proof}
\subsection{Proof of Lemma \ref{lemma:monotone}}\label{appendix:lemma}
\begin{proof}
We first prove the following:\\\newline
\noindent\textbf{Claim}: $\frac{
n \left( 1 + (n-1) F(a)^{n} - n F(a)^{n-1} \right)
\left( (v-a) f(a) - (1 - F(a)) \right)
}{
\left( n (1 - F(a)) \right)^2
}\leq 0$ for $a\in [y^\star,1]$ and $v\in [a,1]$.\\

Since $a\in [y^\star,1]$ and $v\in [a,1]$, $(v-a) f(a) - (1 - F(a))\leq 0$ by definition of $y^\star$. Therefore, to prove the claim, is sufficient to show that $1+(n-1)F(a)^n-nF(a)^{n-1}\geq0$.
Define the function
\begin{equation*}
g(x) = 1 + (n-1)x^n - n x^{n-1}, \quad \text{for } x \in [0,1].
\end{equation*}
Evaluating at the endpoints, we obtain
\begin{equation*}
g(0) = 1, \quad g(1) = 1 + (n-1) - n = 0.
\end{equation*}

Next, computing the derivative,
\begin{equation*}
g'(x) = (n-1)n x^{n-1} - n(n-1) x^{n-2} = n(n-1) x^{n-2} (x - 1).
\end{equation*}
Since \( x - 1 \leq 0 \) for \( x \in [0,1] \), it follows that \( g'(x) \leq 0 \) in this range, meaning \( g(x) \) is decreasing. Since \( g(0) = 1 \) and \( g(1) = 0 \), we conclude that \( g(x) \geq 0 \) for all \( x \in [0,1] \), which proves the claim.\newline

Let $i$ be an agent with valuation $v$, and define $K=\{\,j \neq i \,\mid\, v_j \ge a\}$. Let $U(v,a)$ denote agent $i$'s expected interim payoff with the lottery mechanism with ticket price $a$. We want to show that if $a'\geq a$, then $U(v,a)\geq U(v,a')$. First, we compute $U(v,a)$. We consider two cases:

\smallskip
\textbf{Case 1:} $v < a$.  
The agent's expected payoff is equivalent to the payoff of the second-price auction, i.e. the expected payoff is
\[
(v-E[Z_{n-1}\mid Z_{n-1}\leq v])\Pr[Z_{n-1}\leq v] = \int_0^v F(x)^{n-1}dx
\]
independent of $a$. Thus, no change occurs as $a$ crosses above $v$.

\smallskip
\textbf{Case 2:} $v \ge a$.  
Now agent $i$ always qualifies ($i\in K$).  Let $Y=|K|-1$ be the number of other bidders whose bids exceed $a$, so $Y$ is $\mathrm{Binomial}(n-1,\,1-F(a))$.  

If $Y=0$, the mechanism is effectively a second-price auction with $n$ bidders, among whom $i$ is guaranteed highest if the others are all below $a$. The payoff in this scenario is 
\begin{equation*}
    F(a)^{n-1}(v-E[Z_{n-1}\mid Z_{n-1}\leq a])=F(a)^{n-1}(v-a)+\int_0^a F(x)^{n-1}dx
\end{equation*}

If $Y\ge1$, the mechanism assigns the item at random among the $Y+1$ qualifying bidders, charging $a$.  The agent's expected payoff is
\[
\sum_{m=1}^{n-1}
  \Pr[Y=m]\,
  \frac{v - a}{\,m+1\,}
\;=\;
(v-a)\sum_{m=1}^{n-1}
  \binom{n-1}{m}
  (1-F(a))^m
  F(a)^{n-1-m}
  \frac{1}{m+1}.
\]
Adding the two parts (for $Y=0$ and $Y\ge1$) yields
\[
U(v,a)
\;=\;
\int_0^a F(x)^{n-1}dx
\;+\;
(v-a)\sum_{m=0}^{n-1}
  \binom{n-1}{m}
  (1-F(a))^m\,F(a)^{\,n-1-m}
  \frac{1}{m+1}.
\]
By Lemma~\ref{lemma:Uva_equiv} the sum equals 
\[
U(v,a)
= \int_{0}^{a} F(x)^{n-1} \,dx
+ (v - a) \frac{1 - \bigl[F(a)\bigr]^{n}}{n \bigl[1 - F(a)\bigr]}.
\]
 Observe that
\begin{align*}
    \frac{\partial U(v,a)}{\partial a} &= F(a)^{n-1}-\frac{1-F(a)^n}{n(1-F(a))}+(v-a)f(a)\frac{-n^2F(a)^{n-1}(1-F(a))+n(1-F(a)^n)}{(n(1-F(a))^2}\\
    &=\frac{
n \left( 1 + (n-1) F(a)^{n} - n F(a)^{n-1} \right)
\left( (v-a) f(a) - (1 - F(a)) \right)
}{
\left( n (1 - F(a)) \right)^2
}
\end{align*}
The previous expression is non-positive for $a\in[y^\star,1]$ and $v\in [a,1]$, by the Claim.

Putting both claims together, $\frac{\partial U(v,a)}{\partial a}\leq 0$ for $a\in[y^\star,1]$ and $v\in[a,1]$. Therefore, $U(v,a)$ is monotone non-increasing in this region. In particular, if $v\geq a_n$, 
\[
U(v,a)\geq U(v,v)= \int_0^v F(x)^{n-1}dx.
\]

\end{proof}

\section{Proof of Proposition~\ref{MultiUnit}}

\begin{proof}
Let $v_1,\dots,v_{n-1}$ be the bids reported by the other agents, and assume w.l.o.g. that they are
in descending order and $n\ge m+1$. Consider agent $i=n$ with value $v$ who (when using one account)
submits a bid $v$. The Myerson payment of agent $i$ is
\[
p_i(v,v_{-i})=v\cdot x_i(v,v_{-i})-\int_0^{v} x_i(t,v_{-i})\,dt.
\]

\smallskip
\noindent\textbf{Allocation rule.}
Example~6.1 runs the uniform-price auction iff $v_{(m+1)}< v_{(1)}/m$ (order statistics of the full
profile), and otherwise runs the $(m+1)$-lottery among the top $m+1$ bids. Fixing $v_{-i}$, this yields
the following interim allocation rule for bidder $i$. Let $r:=\max\{v_{m+1},v_1/m\}$.
Then
\[
x_i(t,v_{-i})=\begin{cases}
1, & \text{if }\max\{t,v_1\} > m\,v_m \text{ and } t\ge v_m,\\[3pt]
\frac{m}{m+1}, & \text{if }\max\{t,v_1\}\le m\,v_m \text{ and } t\ge r,\\[3pt]
0, & \text{otherwise.}
\end{cases}
\]
(In particular, when $v_{m+1}\le t < v_m$, the lottery branch requires $t\ge v_1/m$.)

If $v<r$ then $x_i(v,v_{-i})=0$ and $p_i(v,v_{-i})=0$, so we assume $v\ge r$.

\smallskip
\noindent\textbf{Myerson payments.}

\textbf{Case~1:} $v\ge m v_{m}$.
\begin{enumerate}
\item If $v\le v_1$, then the payment is exactly $v_m$.
\item If $v>v_1$. If $v_1\ge m v_{m}$, then the payment is $v_{m}$. Otherwise, the payment is
$\tfrac{m}{m+1}v_{m}+\tfrac{m}{m+1}\max\{v_1/m,v_{m+1}\}$.
\end{enumerate}

\textbf{Case~2:} $v< m v_{m}$.
\begin{enumerate}
\item If $v_1\leq m v_m$, then (since $v\ge r$) the payment is $\tfrac{m}{m+1}\max\{v_1/m,v_{m+1}\}$.
\item If $v_1> m v_m$, the payment is $v_m$ in case $v\ge v_m$ and $0$ otherwise.
\end{enumerate}

\smallskip
\noindent\textbf{Sybil deviations.}
The result is trivial if $n\le m$ (then everyone receives an item and pays $0$), so assume $n>m$.

Consider a unit-demand agent with value $v$ who deviates by submitting $k\ge 1$ bids
$b^1,\dots,b^k\ge 0$ (accounts), with no restriction that any $b^\ell$ equals $v$.
Let $b^{(1)}\ge b^{(2)}\ge \cdots \ge b^{(k)}$ be the bids sorted.

A first observation is that no agent has an incentive to use Sybil bids below the largest $m+1$ bids (including his other Sybil bids) since those will just increase payments without increasing the allocation probability. A second observation is that no agent is incentivized to bid from more than two Sybils. A third Sybil will not increase the probability of getting an item, and just will increase payments. When using two Sybils, if one bids less than the $m+1$‑st biggest bid, that Sybil will not get allocated the item, and can only increase the payment of the other Sybil, making this deviation also not profitable. Therefore, we may assume that agents bid with two Sybils with one bid larger than $v_{m}$. Now fix a two-bid deviation $(a,b)$ with $a\ge b$.

If the deviation induces the uniform-price auction, it is dominated by a single bid.
Indeed, removing the lower bid $b$ can only weakly decrease the $(m+1)$-st order statistic and leaves
the top bid unchanged, so if the profile with $(a,b)$ triggers the auction branch then the profile with only
$a$ also triggers the auction branch. The lower bid $b$ cannot increase the probability of receiving a unit
above $1$ (unit demand) and can only (weakly) increase payments. Therefore such a deviation is weakly
dominated by a single-bid deviation with bid $a$, which cannot be profitable by incentive compatibility.

Thus it remains to consider two-bid deviations that induce the lottery.
In that case both bids have allocation probability $q:=\frac{m}{m+1}$. Moreover, for each of the two
accounts, its opponent set contains the $m$ bids $v_1,\dots,v_m\ge v_m$ and the other account’s bid
(which is $\ge v_m$ in this lottery-relevant case), so the $(m+1)$-st highest opponent bid is at least $v_m$.
From the payment formula in Case~2.1 (applied to the relevant opponent profile of each account),
each account’s expected payment is at least $q v_m$. Hence the \emph{total} expected payment across the
two accounts is at least $2q v_m=\frac{2m}{m+1}v_m$, and the deviating utility is at most
\begin{equation}\label{equation:B}
    v-\frac{2m}{m+1}v_m.
\end{equation}

\smallskip
\noindent\textbf{Comparison with truthful utility.}
We now show truthful bidding yields utility at least \eqref{equation:B} in all cases; this implies no Sybil deviation is
profitable.

First consider $v<v_1$.
\begin{enumerate}
\item If $v_1\ge m v_m$, the mechanism coincides with the uniform-price auction for the relevant
range and Sybil deviations are ruled out by the dominance argument above (auction-branch deviations
reduce to a single bid, which is unprofitable by IC).
\item If $v_1<m v_m$, then truthful utility $U$ (for a single bid) equals
\[
U=\frac{m}{m+1}\Bigl(v-\max\{v_1/m,v_{m+1}\}\Bigr)\ge \frac{m}{m+1}(v-v_m),
\]
since $\max\{v_1/m,v_{m+1}\}\le v_m$ in this case. Because $v<v_1\le m v_m$, we have $v\le m v_m$,
which is equivalent to
\[
v-\frac{2m}{m+1}v_m\le \frac{m}{m+1}(v-v_m)\le U.
\]
Together with \eqref{equation:B}, this rules out profitable Sybil deviations.
\end{enumerate}
Next consider $v\ge v_1$.
\begin{enumerate}
\item The cases $v_1\ge m v_m$ or $v_1<m v_m$ and $v<m v_m$ are analogous to the previous case.
\item If $v_1<m v_m$ and $v\ge m v_m$, then by the payment computation in Case~1.2.,
truthful utility is
\[
v-\tfrac{m}{m+1}v_m-\tfrac{m}{m+1}\max\{v_1/m,v_{m+1}\}\ge v-2\tfrac{m}{m+1}v_m.
\]
Any Sybil deviation that induces the auction branch is dominated by a single-bid deviation (hence not
profitable by IC), and any Sybil deviation that induces the lottery has utility bounded by \eqref{equation:B}. Thus no
Sybil deviation is profitable.
\end{enumerate}
\end{proof}

\section{Proof of Theorem~\ref{thm:superadditive}}
\label{sec:superadditive}
The proof relies on two lemmas. First we have the following version of Myerson's lemma which is straightforward to establish so that we omit the proof: 
\begin{lemma}[Myerson-lemma for general valuations]
\label{thm:myerson-unbounded}
Suppose $v$ is strictly increasing in $\mathbb R_{>0}\times (0,1]$ and twice differential function with $v(\theta,0)=0$ and $v(0,x)=0$ and we have type-convexity and superadditive interim utility. Then, a mechanism \((x,p)\) is Incentive Compatible if and only if, for every
agent \(i\) and every profile \(\theta_{-i}\),
\begin{enumerate}
    \item the allocation \(\theta_i\mapsto x_i(\theta_i,\theta_{-i})\) is non-decreasing on \([0,\infty)\); and
    \item payments satisfy the envelope formula
          \[
              p_i(\theta_i,\theta_{-i})
              \;=\;
              v(\theta_i,x_i(\theta_i,\theta_{-i}))
              \;-\;
              \int_{0}^{\theta_i} \partial_1 v(z,x_i(z,\theta_{-i}))\,dz,
              \qquad \theta_i\ge 0.
          \]
\end{enumerate}
\end{lemma}
Second we have the following properties of the value function:
\begin{lemma}\label{lem:SC+SA}
If $v$ is strictly increasing in $\mathbb R_{>0}\times (0,1]$ and twice differentiable function with $v(\theta,0)=0$ and $v(0,x)=0$ and we have type-convexity and superadditive interim utility then:
\begin{enumerate}
    \item \emph{$\partial_1 v(\theta,x)>0$ for every $\theta,x>0$.} 
    \item \emph{Single-crossing:}\;  
          \(v(\theta',x')-v(\theta',x) \geq v(\theta,x')-v(\theta,x)\) whenever \(\theta'>\theta\) and \(x'>x\);
    \item \emph{Allocation superadditivity:}\;  
          \(v(\theta,x)+v(\theta,y)\le v(\theta,x+y)\) for every \(\theta\ge0\) and \(x,y\ge0\) with \(x+y\le1\).
\end{enumerate}
\end{lemma}

\begin{proof}
\noindent\emph{(1)} Assume, for a contradiction, that
\[
\partial_{1}v(\theta_{0},x)=0 \qquad\text{for some } \theta_{0}>0,\;x>0.
\]
Because \(v\) is convex in the first argument
the function $\theta\mapsto$\(\partial_{1}v(\theta,x)\) is
non-decreasing. On the other hand \(\partial_{1}v(0,x)=0\), so
\[
\partial_{1}v(\theta,x)=0 \quad\text{for every }\theta\in[0,\theta_{0}].
\]
Integrating gives
\[
v(\theta_0,x)-v(0,x)=\int_{0}^{\theta_0}\partial_{1}v(t,x)\,dt=0
\]
hence \(v(\theta_0,x)=v(0,x)\).  This contradicts the assumption that $\theta\mapsto v(\theta,x)$ is strictly increasing. 

\noindent\emph{(2)}  
Fix \(x'>x\) and define \(\phi(\theta):=v(\theta,x')-v(\theta,x)\).  
Then \(\phi'(\theta)=\partial_1 v(\theta,x') -\partial_1 v(\theta,x)\).  
Because \(\partial_1 v(\theta,0)=0\) and \(x\mapsto \partial_1 v(\theta,x)\) is superadditive, it is non-decreasing on \((0,1]\); hence \(\phi'(\theta)\geq0\) for all \(\theta>0\).  
Thus \(\phi\) is non-decreasing, giving the desired inequality for \(\theta'>\theta\).

\noindent\emph{(3)}  
By the properties of $v$ and the fundamental theorem of calculus,
\begin{equation}\label{eq1}
  v(\theta,x)=\int_{0}^{\theta} \partial_1v(t,x)\,dt
  \quad\text{for all }(\theta,x).
\end{equation}
And so, for every $x,y\in [0,1]$ such that $x+y\leq1$,
\begin{equation*}
    v(\theta,x)+v(\theta,y) =  \int_0^{\theta} [\partial_1v(t,x)+\partial_1v(t,y)]dt \leq \int_0^{\theta} \partial_1 v(t,x+y)dt = v(\theta,x+y).
\end{equation*}
\end{proof}

Now we are ready to prove Theorem \ref{thm:superadditive}. The proof follows the same structure as the proof for the linear valuations. In the following, however, we will assume the stronger notion of Sybil-proofness, that is, the agent can employ an arbitrary finite number of Sybil bids.
\begin{lemma} $U_1(\theta_1,\theta_2)= v(\theta_1,1)-v(\theta_2,1)$  and $x_1(u,v)=1$ if $\theta_1\geq \theta_2$, $\theta_1,\theta_2\in\mathbb R_+$.
\end{lemma}
\begin{proof} First, we show that $U_1(\theta_1,\theta_2)\geq v(\theta_1,1)-v(\theta_2,1)$. Suppose that the value of Bidder $1$ is $\theta_1$ and the value of Bidder $2$ is $\theta_2$. Suppose the Bidder $1$ unilaterally deviates and uses $n-1$ Sybils, bidding $\theta_2$ with each one of them. By symmetry, the allocation for each Sybil is $x_i(\theta_{2[n]})=\frac{1}{n}$. Since the mechanism is Incentive Compatible and payments are normalized, it is, in particular, individually rational, and hence, the payment per Sybil is at most $v(\theta_2,1/n)$. Therefore, the utility of the unilateral deviation is at least $v(\theta_1,\frac{n-1}{n})-(n-1)\,v(\theta_2,\frac{1}{n})$. Since the mechanism is Sybil-proof, 
\begin{equation*}
    U_1(u,v)\geq v\left(\theta_1,\frac{n-1}{n}\right )-(n-1)v\left(\theta_2,\frac{1}{n}\right).
\end{equation*}
Since $v(\theta_2,\cdot)$ is superadditive, $(n-1)v(\theta_2,\frac{1}{n})\leq v(\theta_2,1-1/n)$. By continuity of $v$, 
\begin{equation*}
    U_1(u,v)\geq \limsup_{n\rightarrow+\infty} \left [v\left(\theta_1,\frac{n-1}{n}\right)-v\left(\theta_2,1-\frac{1}{n}\right)\right]= v(\theta_1,1)-v(\theta_2,1).
\end{equation*}
Now, we show that $U_1(\theta_1,\theta_2)\leq v(\theta_1,1)-v(\theta_2,1)$.

On one side, we have $U(\theta_1,\theta_2)=\int_0^{\theta_1} \partial_1 v(z,x_1(z,\theta_2))dz$, therefore
\begin{align*}
\int_{0}^{\theta_1}U(\theta_1,\theta_2)\,d\theta_2
   &=\int_{0}^{\theta_1}\!\!\int_{0}^{\theta_1}
        \partial_{1}v(z,x_1(z,\theta_2))\,dz\,d\theta_2\\[6pt]
   &=
        \iint_{\{0\le \theta_2<z\le \theta_1\}}
            \partial_{1}v(z,x(z,\theta_2)\bigr)\,dz\,d\theta_2
        \;+\;
        \iint_{\{0\le z<\theta_2\le \theta_1\}}
            \partial_{1}v(z,x(z,\theta_2))\,dz\,d\theta_2\\[6pt]
   &=
        \iint_{\{0\le \theta_2<z\le \theta_1\}}
            \partial_{1}v(z,x(z,\theta_2))\,dz\,d\theta_2
        \;+\;
        \iint_{\{0\le \theta_2<z\le \theta_1\}}
            \partial_{1}v(\theta_2,x(\theta_2,z))\,dz\,d\theta_2\\[6pt]
   &=
        \iint_{\{0\le \theta_2<z\le \theta_1\}}
            \!\Bigl[
                \partial_{1}v(z,x(z,\theta_2))
               +\partial_{1}v(\theta_2,1-x(z,\theta_2))
              \Bigr]\,dz\,d\theta_2\\[6pt]
   &\le
        \iint_{\{0\le \theta_2<z\le \theta_1\}}
            \partial_{1}v(z,1)\,dz\,d\theta_2\\[8pt]
   &=
        \int_{0}^{\theta_1}\!
            \Bigl[\int_{0}^{z}d\theta_2\Bigr]\partial_1 v(z,1)\,dz
        \;=\;
        \int_{0}^{\theta_1}z\,\partial_1 v(z,1)\,dz\\[8pt]
   &=
        \Bigl[z\,v(z,1)\Bigr]_{0}^{\theta_1}-\int_{0}^{\theta_1}v(z,1)\,dz
        \;=\;
        \theta_1\,v(\theta_1,1)-\int_{0}^{\theta_1}v(\theta_2,1)\,d\theta_2\\[6pt]
   &=
        \int_{0}^{\theta_1}\bigl[v(\theta_1,1)-v(\theta_2,1)\bigr]\,d\theta_2. \\ 
   &\leq \int_0^{\theta_1} U(\theta_1,\theta_2)d\theta_2
\end{align*}
In the first step, we use Myerson lemma to rewrite
\(\int_{0}^{\theta_1} U(\theta_1,\theta_2)\,d\theta_2\) as 
\(\int_{0}^{\theta_1}\!\!\int_{0}^{\theta_1} \partial_1 v\!\bigl(z,x(z,\theta_2)\bigr)\,dz\,d\theta_2\).
Next, we partition the square \([0,\theta_1]\times[0,\theta_1]\) into the two
congruent right--triangles
\(T_{1}=\{(z,\theta_2)\mid 0\le \theta_2<z\le \theta_1\}\) and
\(T_{2}=\{(z,\theta_2)\mid 0\le z<\theta_2\le \theta_1\}\).
Relabelling the variables on \(T_{2}\) and invoking the
identity \(x(z,\theta_2)+x(\theta_2,z)=1\) allow the two contributions to be combined,
yielding the integrand
\(\partial_1 v\bigl(z,x(z,\theta_2)\bigr)+\partial_1 v\bigl(\theta_2,1-x(z,\theta_2)\bigr)\).

For the first the inequality, fix a point with \(\theta_2<z\).
Because \(\theta\mapsto\partial_1 v(\theta,y)\) is non--decreasing, we have
\(\partial_1 v(\theta_2,1-x)\le\partial_1 v(z,1-x)\).
Superadditivity in the second argument then gives
\(\partial_1 v(z,x)+\partial_1 v(z,1-x)\le\partial_1 v(z,1)\),
so the integrand is bounded above by \(\partial_1 v(z,1)\).

Since this upper bound depends only on \(z\), the inner integration in
\(\theta_2\) simply contributes the factor \(z\), leaving
\(\int_{0}^{\theta_1} z\,\partial_1 v(z,1)\,dz\).
An elementary integration by parts
transforms this integral into
\(\int_{0}^{\theta_1}[v(\theta_1,1)-v(\theta_2,1)]\,d\theta_2\). In the last inequality, we use that $U_1(\theta_1,\theta_2)\geq v(\theta_1,1)-v(\theta_2,1)$.

In short, we showed that $U_1(\theta_1,\theta_2) = v(\theta_1,1)-v(\theta_2,1)$ a.s.

Next, we show that $\theta\mapsto U(\theta_1,\theta)$ is non-increasing. Let $\theta_2\geq\theta_2'$, then
\begin{align*}
    U_1(\theta_1,\ \theta_2) &= \int_0^{\theta_1} \partial_1 v(z,x_1(z,\ \theta_2))dz = \int_0^{\theta_1} \partial_1 v(z,1-x_1(\theta_2,z))dz \\
    &\leq \int_0^{\theta_1} \partial_1 v(z,1-x_1(\theta_2',z))dz = \int_0^{\theta_1} \partial_1 v(z,x_1(\theta_2',z))dz \\
    & \leq \int_0^{\theta_1} \partial_1 v(z,x_1(z,\theta_2))dz = U_1(\theta_1,\theta_2')
\end{align*}

Therefore, for every $\theta$, there exists a sequence $\theta_i\uparrow \theta$ such that $U_1(\theta,\theta_i)=v(\theta,1)-v(\theta_i,1)$. So, since $\theta'\mapsto U(\theta,\theta')$ is non-increasing,
\begin{equation*}
    U(\theta,\theta)\leq \liminf_{n\rightarrow +\infty}U(\theta,\theta_i)=0
\end{equation*}
so that $U(\theta,\theta)=0$. However, by Myerson lemma
\begin{equation}\label{eq:utility0}
    0=U_1(\theta,\theta) =\int_0^{\theta} \partial_1 v(z,x_1(z,\theta))dz,
\end{equation}
so that $\partial_1 v(z,x_1(z,\theta))=0$ a.s. for $z\in [0,\theta)$. By Lemma \ref{lem:SC+SA}, if $z,x_{1}(z,\theta)>0$ then $\partial_{1}v(z,x_{1}(z,\theta))>0$. 
Assume, toward a contradiction, that there exists $z_{0}>0$ such that $x_{1}(z_{0},\theta)>0$. 
Because the mapping $z\mapsto x_{1}(z,\theta)$ is non-decreasing, we would have $x_{1}(z,\theta)>0$ for every $z\in[z_{0},\theta)$; hence $\partial_{1}v(z,x_{1}(z,\theta))>0$ on that interval. 
Consequently,
\begin{equation*}
    \int_{0}^{\theta}\partial_{1}v(z,x_{1}(z,\theta))\,dz>0,
\end{equation*}
which contradicts \ref{eq:utility0}. 
Therefore, $x_{1}(z,\theta)=0$ for all $z\in[0,\theta)$.

\end{proof}
In the previous lemma, we have established that for the case of two bidders, the allocation rule always assigns the whole item to the highest value bidder. Next, we show by induction on the number of bidders that in the case of more than two bidders, the item is also allocated to the highest value bidder.

We claim that for any $n \geq 2$ and any $\theta_1' < \theta_1 := \max\{\theta_2, \dots , \theta_n\}$, Bidder $1$ will not
obtain the object if he reports $\theta_1'$, i.e. $x_1(\theta_1', \theta_2, \dots, \theta_n) = 0$. We proceed by induction
on $n$. We already know the base case $n = 2$ is true. Suppose that the claim is true
for $n = k$ and assume towards contradiction that $x_1(\theta_1', \theta_2,\dots, \theta_{k+1}) > 0$ for some $\theta_1'<\theta_1:=\max\{\theta_2,\dots.\theta_{k+1}\}$. 
Assume that the types are labeled so that $\theta_2\geq \theta_3\geq \dots \geq \theta_k \geq \theta_{k+1}$.
Since $\theta\mapsto x_1(\theta,\theta_2,\dots,\theta_{k+1})$ is non-decreasing, by Lemma \ref{lem:SC+SA} this implies that $\partial_1 v(z,x_1(z,\theta_2,\dots,\theta_{k+1}))>0$ for all $\theta\geq \theta_1'$. Consider the scenario where there are $k$ bidders, Bidder $1$ has value $\theta_1$ and Bidder $i$ has value $\theta_i$ for $2\leq i\leq k$. If Bidder $1$ bids truthfully, his utility payoff is
\begin{equation*}
    \int_0^{\theta_1} \partial_1 v(z,x_1(z,\theta_2,\dots,\theta_k))dz=0
\end{equation*}
by induction hypothesis, since $\theta_1:=\max(\theta_2,\dots\theta_{k+1}) = \max(\theta_2,\dots,\theta_{k})$, by labeling. However, if Bidder $1$ deviates by bidding $\theta_1$ himself and
creating a Sybil that bids $\theta_{k+1}$, then his utility payoff is (where $x_i(z):=x_i(z,\theta_{-i})$ and $p_i(z):=p_i(z,\theta_{-i})$ for $i=1,k+1$ and $z\in\mathbb R_+$):
\begin{align*}
    v(\theta_1,x_1(\theta_1)+x_{k+1}(\theta_{k+1})) -p_1(\theta_1)-p_{k+1}(\theta_{k+1}) &\geq v(\theta_1,x_1(\theta_1))-p_1(\theta_1)+v(\theta_1,x_{k+1}(\theta_{k+1}))-p_{k+1}(\theta_{k+1}) \\
    &\geq \int_0^{\theta_1} \partial_1 v(z,x_1(z))dz + \underbrace{v(\theta_{k+1},x_{k+1}(\theta_1))-p_{k+1}(\theta_{k+1})}_{\geq0\text{ By IR}}\\
    &\geq\int_{\theta_1'}^{\theta_1} \partial_1 v(z,x_1(z))dz>0
\end{align*}
which contradicts Sybil-proofness.
To summarize, we have shown that the allocation rule shall reward the item to the
highest bidder (when there is a tie among bids, the symmetry assumption implies
uniform tie-breaking rule) and therefore, by Incentive Compatibility, the mechanism
is the generalized second-price auction with the symmetric tie-breaking rule.

\end{document}